\def \R{{\mathbb R}}
\def \N{{\mathbb N}}
\DeclareMathOperator{\sign}{sign}
\DeclareMathOperator{\Tr}{Tr}
\DeclareMathOperator{\Id}{Id}
\newtheorem{theorem}{Theorem}[section]
\newtheorem{lemma}[theorem]{Lemma}
\newtheorem{definition}[theorem]{Definition}
\newenvironment{proof}
 {\begin{trivlist} \item[\hskip \labelsep {\bf Proof.}]}
 {\hfill$\Box$\end{trivlist}}
\begin{document}

\title{Compressed Manifold Modes: Fast Calculation and Natural Ordering}
\author{Kevin Houston \\
School of Mathematics \\ University of Leeds \\ Leeds, LS2 9JT, U.K. \\
e-mail: k.houston@leeds.ac.uk \\
www.maths.leeds.ac.uk/$\sim $khouston/
}
\date{\today }
\maketitle
\begin{abstract}
Compressed manifold modes are locally supported analogues of eigenfunctions of the Laplace-Beltrami operator of a manifold. In this paper we describe an algorithm for the calculation of modes for discrete manifolds that, in experiments, requires on average 47\% fewer iterations and 44\% less time than the previous algorithm. We show how to naturally order the modes in an analogous way to eigenfunctions, that is we define a compressed eigenvalue. Furthermore, in contrast to the previous algorithm we permit unlumped mass matrices for the operator and we show, unlike the case of eigenfunctions, that modes can, in general, be oriented.
\end{abstract}

\section{Introduction}
The eigenvalues and eigenfunctions of the Laplace-Beltrami operator (LBO) on a discrete manifold have found many applications in geometry processing, for example, in shape matching, remeshing (such as quadrangulation), smoothing, and shape identification, see for example, \cite{Crane:2013:DGP, LevyZhang}.

In analogy with a Fourier basis, the eigenfunctions of the LBO form a basis, called the {\textbf{manifold harmonic basis}} (see \cite{vallet2008spectral}), of the functions on the manifold. 
One major drawback of this basis is that it does not relate, at least in an intuitive way, to observable features of manifolds. In the pioneering work \cite{ozolicnvs2013compressed,Ozolins04022014} it is shown that one can produce on $\R ^n$ a set of functions with localized support, i.e., compactly supported, which will function as a basis. They named these {\textbf{compressed modes}}. They proposed that these may be used as a natural basis for solving PDEs and suggested that they could be extended to general discrete manifolds. This generalization was made in  \cite{CMM14} and were called {\textbf{compressed manifold modes}} (CMMs).
These CMMs are locally supported and, crucially, they seem to be supported at important features of the manifold. For example in the top line of Figure~\ref{fig:locsup} one can see that the collection of six CMMs is supported on the arms, legs, head and lower torso. That is, features that a human would identify are detected by the modes. The six modes pictured on the less symmetric Aquarius the Water Carrier, are also supported on regions which a human may identify as regions of interest.
Twenty modes are given later in Figure~\ref{fig:teapot} for the classic teapot.

\begin{figure}
\begin{center}
\includegraphics[width=12cm]{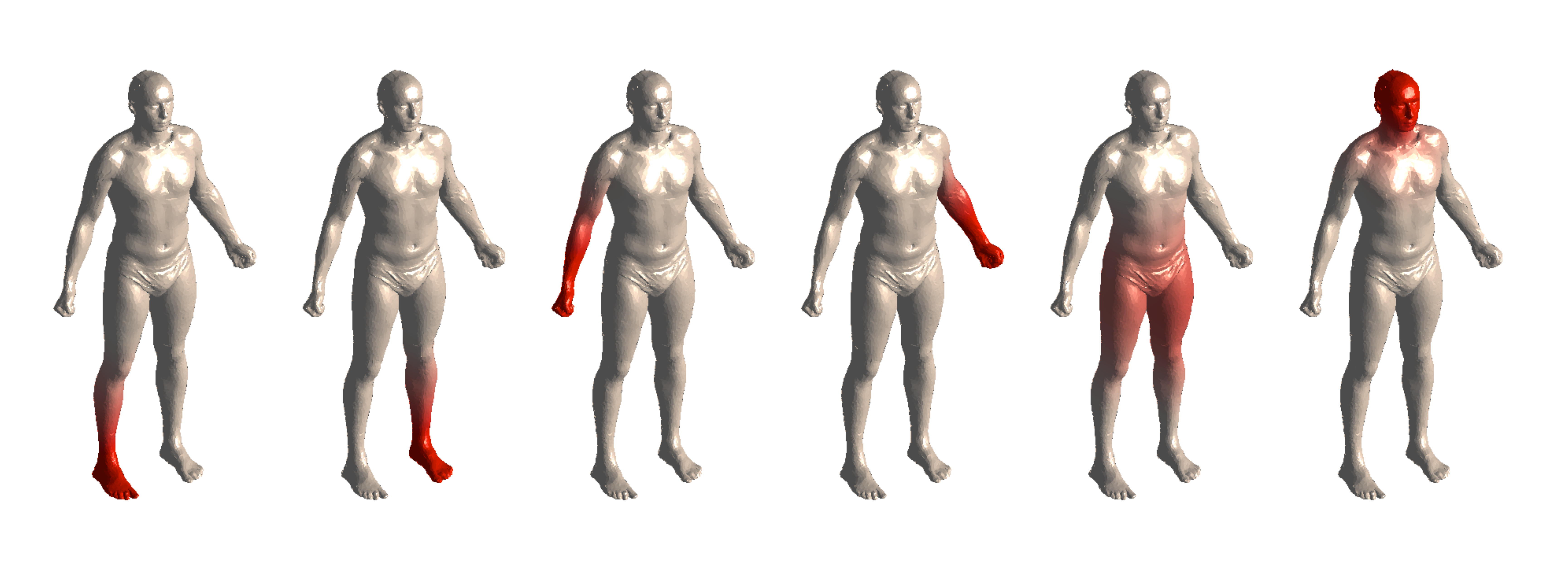}

\includegraphics[width=12cm]{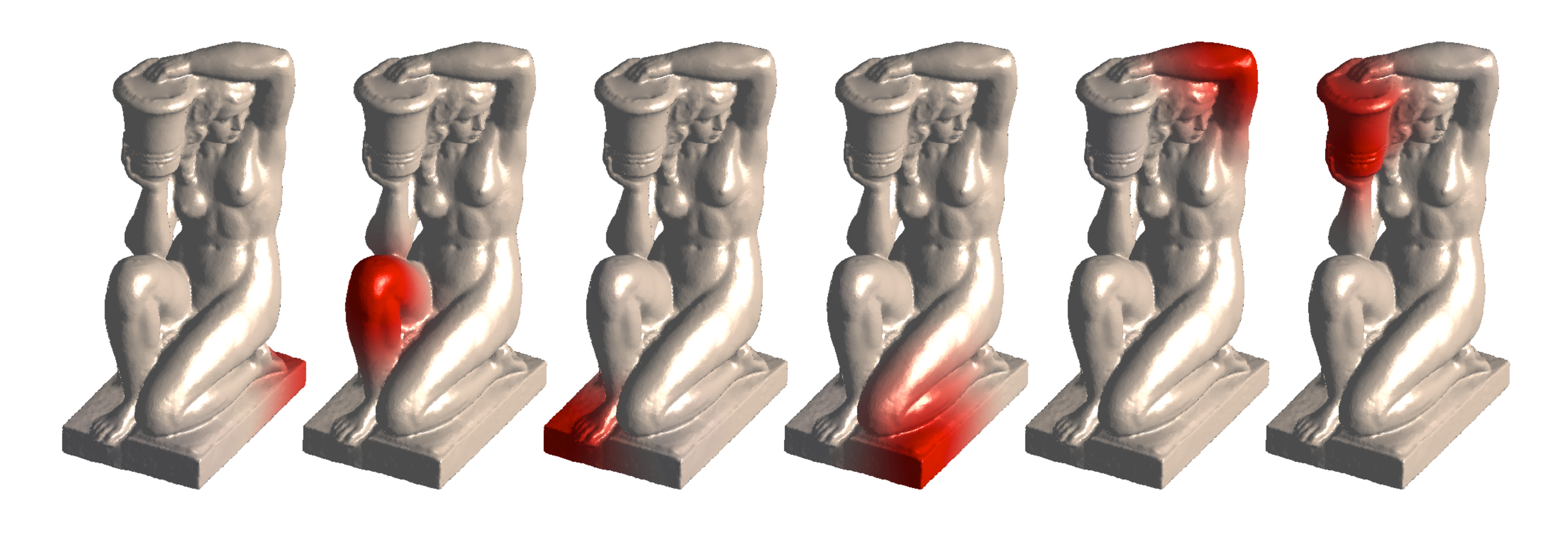}
\caption{\label{fig:locsup}Local support and identification of natural features.}
\end{center}
\end{figure}
Similarly the support for $15$ modes on a L-shaped mesh is local and identifies significant features such as corners as shown in Figure~\ref{fig:lshape}.
\begin{figure}
\begin{center}
\includegraphics[width=12cm]{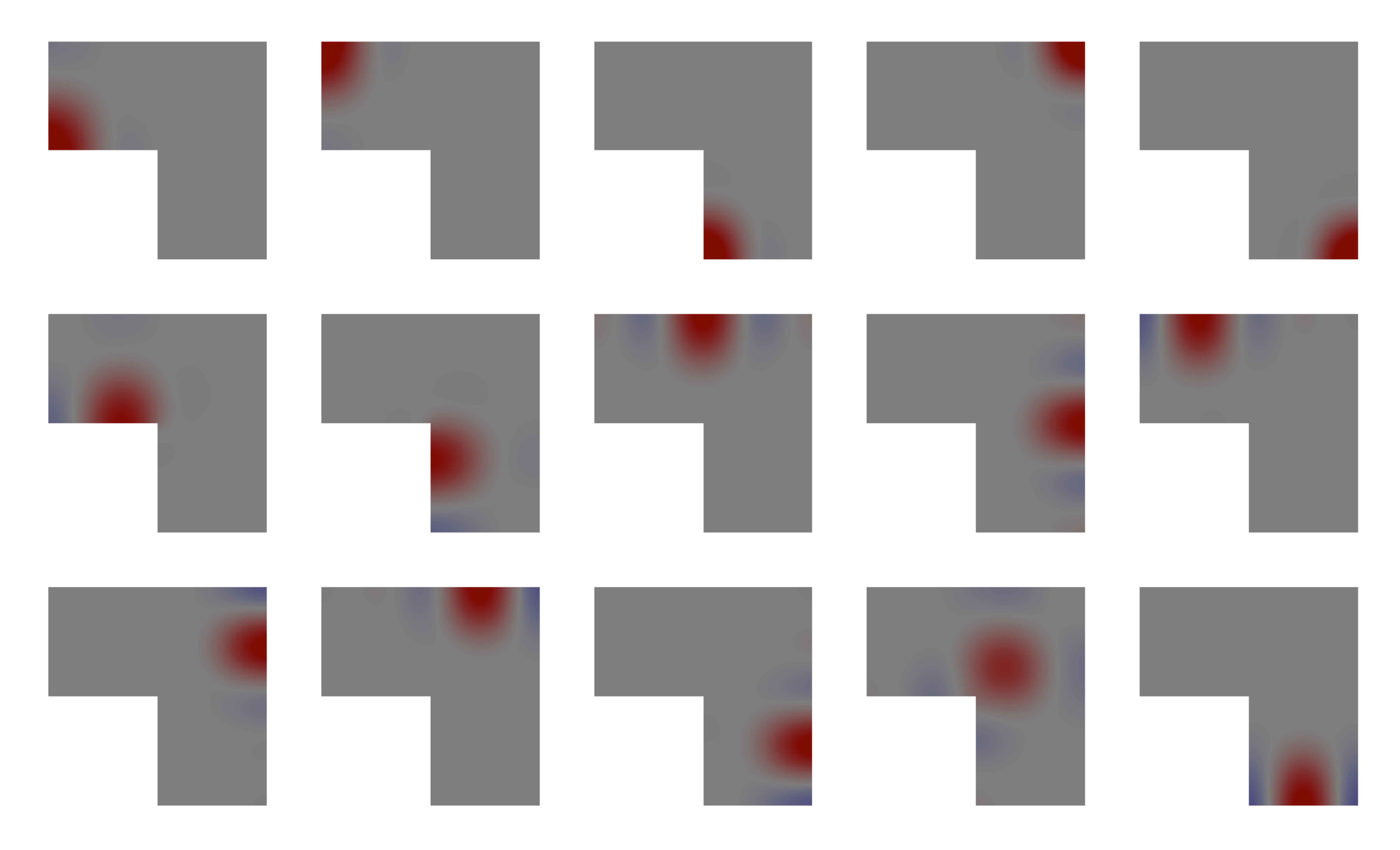}
\caption{\label{fig:lshape}Support on a L-shaped mesh, red is positive and blue is negative.}
\end{center}
\end{figure}

In \cite{CMM14} the authors show that, in contrast to the eigenfunctions and eigenvalues of the LBO, the CMMs are robust with respect to noise and holes. Examples are given where shape matching is achieved in the presence of noise and partial sampling of meshes. Furthermore, the theory in \cite{ozolicnvs2013compressed,Ozolins04022014} is developed as a way of solving partial differential equations.  Hence, CMMs are potentially important objects of study in a diverse range of subjects.

Two significant problems of compressed manifold modes that are identified in \cite{CMM14} are speed and ordering. They take longer to generate than eigenfunctions and no method for ordering was given. Further problems are that the algorithm in \cite{CMM14} is restricted to an LBO with diagonal mass matrix and it is noted that there is ambiguity of sign of the modes in the same way that unit eigenfunctions of a matrix are only defined up to sign.

This paper deals with all these problems. The main contributions are as follows: 
\begin{enumerate}
\item We adapt an algorithm from \cite{goldstein2012fast} to speed up the algorithm given in \cite{CMM14} for computation of the modes. This accelerated algorithm is described in Section~\ref{sec:acmm} and experiments show an average 44\% reduction in the time taken with a 47\% decrease in the number of iterations required.
The accuracy of the accelerated version is measured in Section~\ref{sec:accuracy}.
\item We describe a natural ordering for the modes in Section~\ref{sec:order} that is entirely analogous to the ordering of eigenvalues of a matrix.
\item The new algorithm allows non-diagonal mass matrices. The explanation of this is given in Section~\ref{sec:lump}.
\item Section~\ref{sec:flip} shows how, in contrast to the case of eigenvectors, the modes can be oriented, i.e., can be given a sign. This can be seen by comparing the colours in Figure~\ref{fig:cmm14fig13} with that of Figure~13 in \cite{CMM14}. The modes in Figure~\ref{fig:cmm14fig13} have been `flipped' where necessary.
\end{enumerate}

The Matlab code of the implementation of the algorithms is freely available under a Creative Commons 
Attribution-NonCommercial-ShareAlike (CC BY-NC-SA) license. It can be found at 
\url{http://www1.maths.leeds.ac.uk/~khouston/}.

Acknowledgements: My thanks to Thomas Neumann for helpful discussions and comments.
The meshes of humans come from the SCAPE dataset, \cite{SCAPE} and the Aquarius mesh is from the EPFL Computer Graphics and Geometry Laboratory.
The Algorithms Latex bundle by Rog\'erio Brito was used in preparation of the manuscript.

%
%

\section{Background, previous and related work}

In recent years there has been much interest in the eigenvalues and eigenfunctions of the discrete Laplace-Beltrami operator. For a smooth manifold $M$ the classical differential geometry Laplace-Beltrami operator, denoted $\Delta $, has a set of eigenfunctions $\{ \varphi _k \}$ and associated eigenvalues, $\{ \lambda _k\} $, determined by 
\[
\Delta \varphi _k = \lambda _k \varphi _k
\]
where $k\in \N $ and $\lambda _k\in \R$. See \cite{berger2003panoramic}. 
The self-adjointness of $\Delta $ implies that the eigenvalues are real and that the eigenfunctions are orthogonal with respect to the $L_2$-inner product: $\langle f, g\rangle = \int_M fg $.

In the discrete case we denote the LBO by $L$ and decompose it as $L=A^{-1}W$ where $A$ is a {\textbf{mass matrix}}, i.e., a matrix which relates to the area/volume around the vertices of the discrete manifold, and $W$ is a weight matrix, such as the cotan matrix, (see \cite{Crane:2013:DGP,LevyZhang} for further references). 

Both $A$ and $W$ are symmetric $N\times N$ matrices where $N$ is the number of vertices.
The matrix $L=A^{-1}W$ is in general not symmetric but is self-adjoint with respect to the $L_2$-inner product given by $\langle v, w \rangle =v^TAw$, where $v^T$ denotes the transpose of the vector $v$. This implies that its eigenvalues are real and its eigenvectors are orthogonal with respect to the inner product.
The eigenvalues and eigenvectors of $L$ are solutions of the symmetric eigenvalue problem 
\[
W\varphi = \lambda A\varphi .
\]
In this paper we write $K$ eigenfunctions as the columns of the $N\times K$ matrix $\Phi $.

In \cite{ozolicnvs2013compressed} {\textbf{compressed modes}} are introduced as a solution of a minimization problem involving the Hamiltonian operator $H=-(1/2)\Delta +V(x)$, where $V$ is a potential function, %
with what is called an $L_1$-regularizing term, that is, $(1/\mu)\sum_j |\psi _j|_1$ for modes $\psi _j$ where $\mu \in \R $ is a parameter controlling how localized the modes are. 

We produce a set $\Psi _K=\{ \psi _j \} _{j=1}^K$ arising from the variational problem 
\[
\min_{\Psi _K} \sum_{j=1}^K \left(  \dfrac{1}{\mu } \left| \psi _j \right|_1 + \langle \psi _j , H\psi _j \rangle  \right) 
{\text{ such that }} \Psi ^T\Psi = \Id
\] 
where $\Psi $ is the $N\times K$ matrix given by arranging the $\psi _j$ in columns.

The generalization, called {\textbf{compressed manifold modes}}, described in \cite{CMM14} is as follows. 
For $K\in \N$ and $\mu \in \R $, the first $K$ compressed manifold modes are the columns of the matrix $\Phi $, where $\Phi $ is determined by the constrained optimization problem
\[
\min_{\Phi } \Tr \left( \Phi ^T W \Phi \right) + \mu || \Phi ||_1
{\text{ such that }} \Phi ^TA \Phi = \Id.
\]
Here $||\Phi ||_1$ is the sum of the absolute values of entries of $\Phi $. (Note that, rather confusingly, the compression parameter is $1/\mu $ in \cite{ozolicnvs2013compressed} and $\mu $ in \cite{CMM14}. We will follow the latter as our algorithm is a generalization of theirs.)
A mode is an analogue of the Laplace-Beltrami eigenfunctions.

The parameter $\mu $ is a measurement of the compression of the support of the modes. A large $\mu $ means large compression of support, i.e, the support gets smaller, and a small $\mu $ has small compression, i.e., large support.

%
%
\section{Accelerated ADMM Algorithm}
\label{sec:acmm}
This section contains the first contribution of the paper -- an accelerated version of the algorithm presented in \cite{CMM14} to calculate the compressed manifold modes. In the tests this new algorithm required 47\% fewer iterations to reach convergence and resulted in a 44\% time improvement.

In \cite{ozolicnvs2013compressed}, Ozoli{\c{n}}{\v{s}} et al propose a Splitting Orthogonality Constraint (SOC) algorithm (described in detail in \cite{split2014}) for calculating compressed modes in the case that the manifold is $\R ^n$. In \cite{CMM14}, Neumann et al show that the SOC algorithm does not function well for more general manifolds and to counteract this they propose an Alternating Direction Method of Multipliers (ADMM) algorithm. The ADMM method is known to be particularly slow, see \cite{BoydADMM}. A method of acceleration based on the Nesterov method, \cite{nesterov1983method}, is described in \cite{goldstein2012fast} and we propose a modification of this to increase the speed of the algorithm from \cite{CMM14}. The algorithm is a variant of the gradient descent method with an over-relaxation step. 

Let $f$ and $g$ be functions and that we wish to minimize $f(u)+g(v)$ subject to $Au+Bv=c$. Let $\rho $ be a penalty parameter as in \cite{BoydADMM}. The accelerated algorithm is given in Algorithm~\ref{alg:cmms}. 
\begin{algorithm}
\caption{Accelerated Calculation of CMMs.}
\label{alg:cmms}
\begin{algorithmic}[1]
\REQUIRE $\alpha _1=1$, $\eta=0.999$.
\FOR{$k=1,2,3,\dots $}
\STATE $u_k=\arg\min f(u) + \dfrac{\rho}{2} ||Au+B\widehat{v}_k+c-\widehat{\lambda }_k ||^2$
\STATE $v_k=\arg\min g(v) + \dfrac{\rho}{2} ||Au_k+Bv+c-\widehat{\lambda }_k||^2$
\STATE $\lambda _k =\widehat{\lambda} _k  +Au_k+Bv_k+c$
\STATE $c_k=\rho \left(||\lambda _k -\widehat{\lambda}_k ||^2 + ||B(v_k-\widehat{v} _k)||^2\right) $ 
\IF{$c_k<\eta c_{k-1}$}
\STATE $\alpha _k=1$
\ENDIF
\STATE $\alpha _{k+1} = \dfrac{1+\sqrt{1+4 \alpha_k^2}}{2}$
\STATE $\widehat{v}_{k+1} = v_k + \dfrac{\alpha _k -1}{\alpha _{k+1}} \left( v_k - v_{k-1} \right) $
\STATE $\widehat{\lambda }_{k+1} = \lambda _k + \dfrac{\alpha _k -1}{\alpha _{k+1}} \left( \lambda _k - \lambda _{k-1} \right) $
\ENDFOR
\end{algorithmic}
\end{algorithm}
To compute the compressed manifold modes we proceed as described in Section 3.2 of \cite{CMM14}. The optimization function is split into the sum of three (rather than two) functions:
\[
\Tr (\Phi ^TL\Phi ) + \mu || \Phi || _1 + \iota (\Phi ) 
\]
where the indicator function $\iota$ is defined by
\[
\iota (\Phi ) = \left\{
\begin{array}{cl}
0, & {\text{if }} \Phi ^T A \Phi =\Id, \\
\infty , & {\text{otherwise}}.
\end{array}
\right.
\]
They then reformulate the problem as 
\[
\min_{\Phi , S, E} \iota (\Phi ) + \Tr (E^T LE) + \mu || S||_1 
{\text{ such that }} \Phi =S, ~\Phi =E.
\]
We proceed similarly to \cite{CMM14} and use the following in Algorithm~1 : $f=\iota$, $u=\Phi$, $v=\left[ \begin{array}{c}E\\S \end{array} \right] $ and
\[
g \left( \left[ \begin{array}{c}E\\S \end{array} \right]
\right) = 
\left[ \begin{array}{c}\Tr (E^TLE) \\\mu ||S||_1 \end{array} \right] .
\]
We set 
\[
A=\left[
\begin{array}{c}
I \\
I 
\end{array}
\right], 
\qquad 
B=\left[
\begin{array}{cc}
-I & 0  \\
0 & -I 
\end{array}
\right]
\quad
\text{ and } \quad
c=0.
\]

Note that there is no guarantee that this algorithm converges. That the algorithm in \cite{goldstein2012fast} converges for a weakly convex optimization problem is shown in \cite{goldstein2012fast}. The essential part of the proof is the monotonic decrease in a certain residual. The optimization problem we wish to solve is not weakly convex and the residual does not monotonically decrease (as can be seen in examples). Hence we make a modification to the restart rule to ensure that the algorithm does not become stuck on the same values.
The acceleration process requires a parameter, denoted $\eta $, to initiate the restart. In all the experiments in this paper $\eta $ was set to $0.999$.

In the implementation used for the experiments in this paper (and in \cite{CMM14}) the penalty parameter is varied as in Section 3.4 of \cite{BoydADMM}. This greatly increases the speed of both algorithms.

Furthermore, convergence was determined by the use of the {\textbf{primal and dual residuals}}. 
The primal residual at iteration $k$ is $r_k=Au_k+Bv_k-c$ and the dual residual is $s_k=\rho A^TB(v_k-v_{k-1})$. 
The precise condition for stopping the algorithm are described in Section~3.3.1 of \cite{BoydADMM}.
The algorithm terminates when $||r_k||_2\leq \epsilon^{\text{pri}}$ and $||s_k||_2\leq  \epsilon ^{\text{dual}}$ for the two tolerances, $\epsilon^{\text{pri}}$ and $\epsilon ^{\text{dual}}$. These are determined from
\begin{align*}
\epsilon^{\text{pri}} &=\sqrt{2n} \epsilon^{\text{abs}} + \epsilon^{\text{rel}} \max \left\{ ||\Phi ||_2 , \sqrt{||E||_2^2+||S||_2^2} \right\} , \\
\epsilon ^{\text{dual}}&=\sqrt{n} \epsilon^{\text{abs}} + \epsilon^{\text{rel}} ||\lambda ||_2
\end{align*}
where $n$ is the number of vertices for the mesh, $\epsilon^{\text{abs}}$ is an absolute tolerance, $\epsilon^{\text{rel}}$ is a relative tolerance and $\lambda $ is the dual variable in the algorithm. In all experiments, following \cite{CMM14}, we took 
$\epsilon^{\text{abs}}=10^{-8}$ and $\epsilon^{\text{rel}}=10^{-6}$.

In the experiments the two algorithms were given the same random initialization of numbers between $0$ and $1$. This was repeated 30 times for each mesh. 
Table~\ref{tab:compare} compares the average number of iterations and average time taken for the accelerated algorithm described above and the one from \cite{CMM14}. The meshes Stand, Crouch and Bent Limbs are the those from first, second and third lines of Figure~\ref{fig:cmm14fig13} respectively; L-shape is the L-shape from Figure~\ref{fig:lshape}; Aquarius is the water carrier statue mesh from Figure~\ref{fig:locsup}. The experiments were performed on an iMac with 3.4 GHz Intel core i7 and 8GB RAM.

In the experiments the accelerated algorithm required on average 47\% fewer iterations than the original algorithm with a range between 9\% and 88\%. The improvement in speed was nearly as good with an average 44\% improvement. The range of improvements was 2\% to 87\%. Hence, we can conclude that the improvements are significant and the algorithm is worth implementing when compared to the original.

Typical graphs comparing the sums of the prime and dual residuals for the two algorithms are shown in 
Figure~\ref{fig:compareresidual}. These were selected to give some insight into how the accelerated version behaves. We can see that its residual sum is a condensed and bumpier version of the unaccelerated version's.
The combined residual for the accelerated algorithm looks like it has greater fluctuations and this is indeed the case. This is pictured in close up Figure~\ref{fig:ripples} where one can see that these are Nesterov `ripples' -- a common feature of Nesterov methods. The algorithm restarts at the top of the bounce.

\begin{table}
\begin{center}
\begin{tabular}{|c|c|c|c|c|c|c|c|c|c|c|c|c|c|c|c|}
\hline
\parbox[t]{1cm}{\ \\ \ \\ Mesh}	& \parbox[t]{1cm}{\ \\ \ \\  $\mu $ } 	&	\parbox[t]{1cm}{\ \\ \ \\ $K$}	  	&	\parbox[t]{1cm}{\ \\Fast\\ADMM\\ iter's}	&	\parbox[t]{1cm}{\ \\ ADMM\\ iter's}	&	\parbox[t]{1cm}{\ \\Percent\ reduction}	&	\parbox[t]{1cm}{Fast\\ADMM\\ time\\ (in sec) }	&	\parbox[t]{1cm}{\ \\ ADMM\\ time\\ (in sec)}	&\parbox[t]{1cm}{\ \\Percent\ reduction}	
 \\
\hline
Stand	&	0.008	&	10	&	1381	&	2889	&	48	&	43	&	86	&	45	\\
Crouch	&	0.008	&	10	&	1140	&	3177	&	57	&	38	&	89	&	49	\\
Bent	limbs&	0.008	&	10	&	1218	&	2373	&	49	&	38	&	71	&	47	\\
\hline
Stand&	0.001	&	6	&	1329	&	1980	&	33	&	37	&	53	&	30	\\
Crouch	&	0.001	&	6	&	1916	&	2704	&	30	&	48	&	66	&	27	\\
Bent limbs	&	0.001	&	6	&	1813	&	3640	&	46	&	50	&	95	&	43	\\
\hline
Stand	&	0.0075	&	6	&	829	&	1498	&	44	&	23	&	40	&	42	\\
Crouch	&	0.0075	&	6	&	902	&	1660	&	44	&	23	&	40	&	41	\\
Bent limbs	&	0.0075	&	6	&	922	&	1645	&	43	&	25	&	44	&	41	\\
\hline
L Shape	&	0.02	&	10	&	7867	&	18597	&	57	&	229	&	469	&	51	\\
\hline
Aquarius	&	0.0075	&	6	&	1568	&	4349	&	60	&	243	&	661	&	59	\\
Aquarius	&	0.0075	&	10	&	2220	&	5182	&	55	&	412	&	875	&	50	\\
\hline
Average	&		&		&		&		&	47	&		&		&	44	\\
\hline
\end{tabular}
\end{center}
\caption{\label{tab:compare}Comparison of the accelerated and original ADMM algorithms. For each mesh the average of 30 iterations is taken.}
\end{table}

\begin{figure}
\begin{center}
\includegraphics[width=6cm]{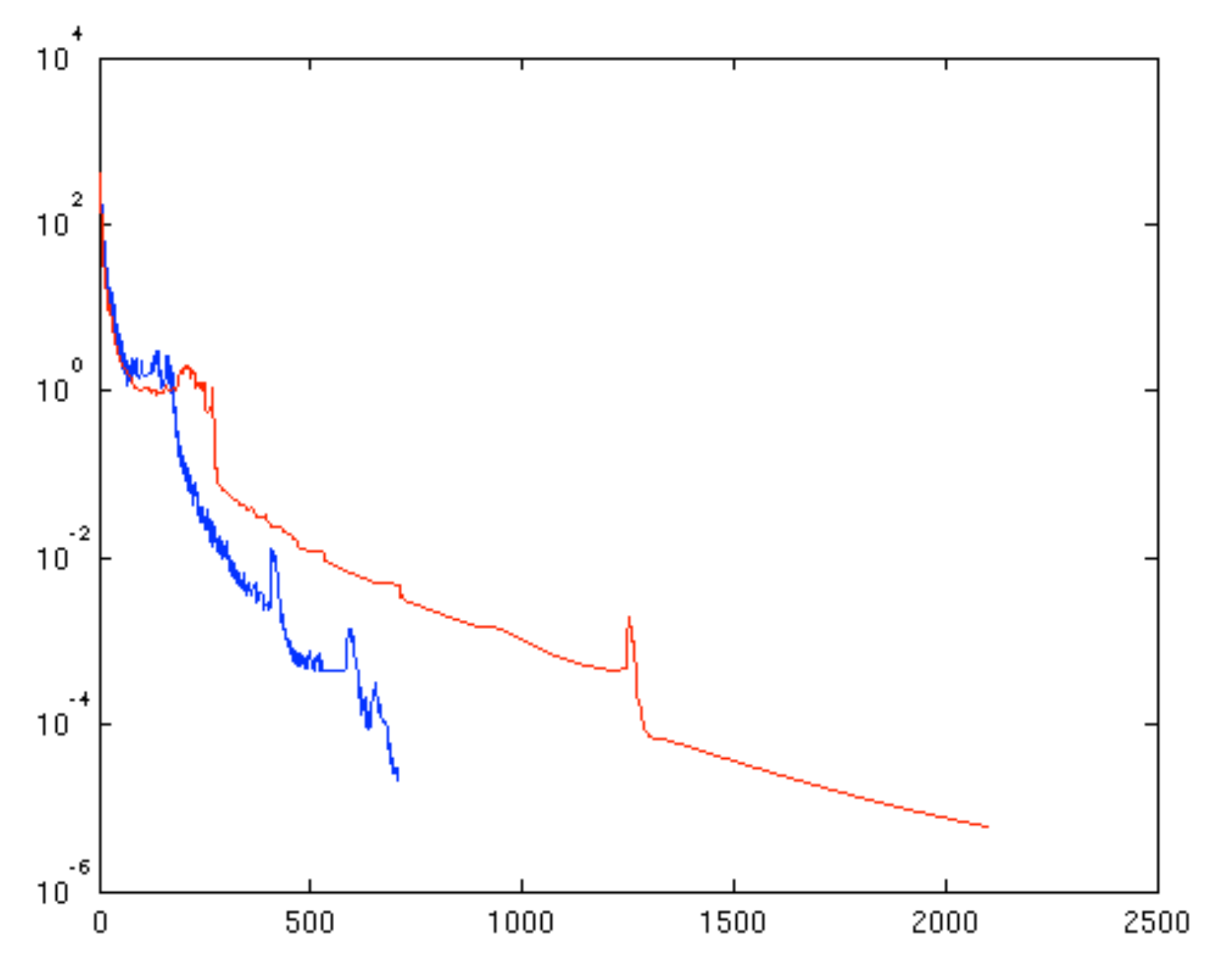}
\includegraphics[width=6cm]{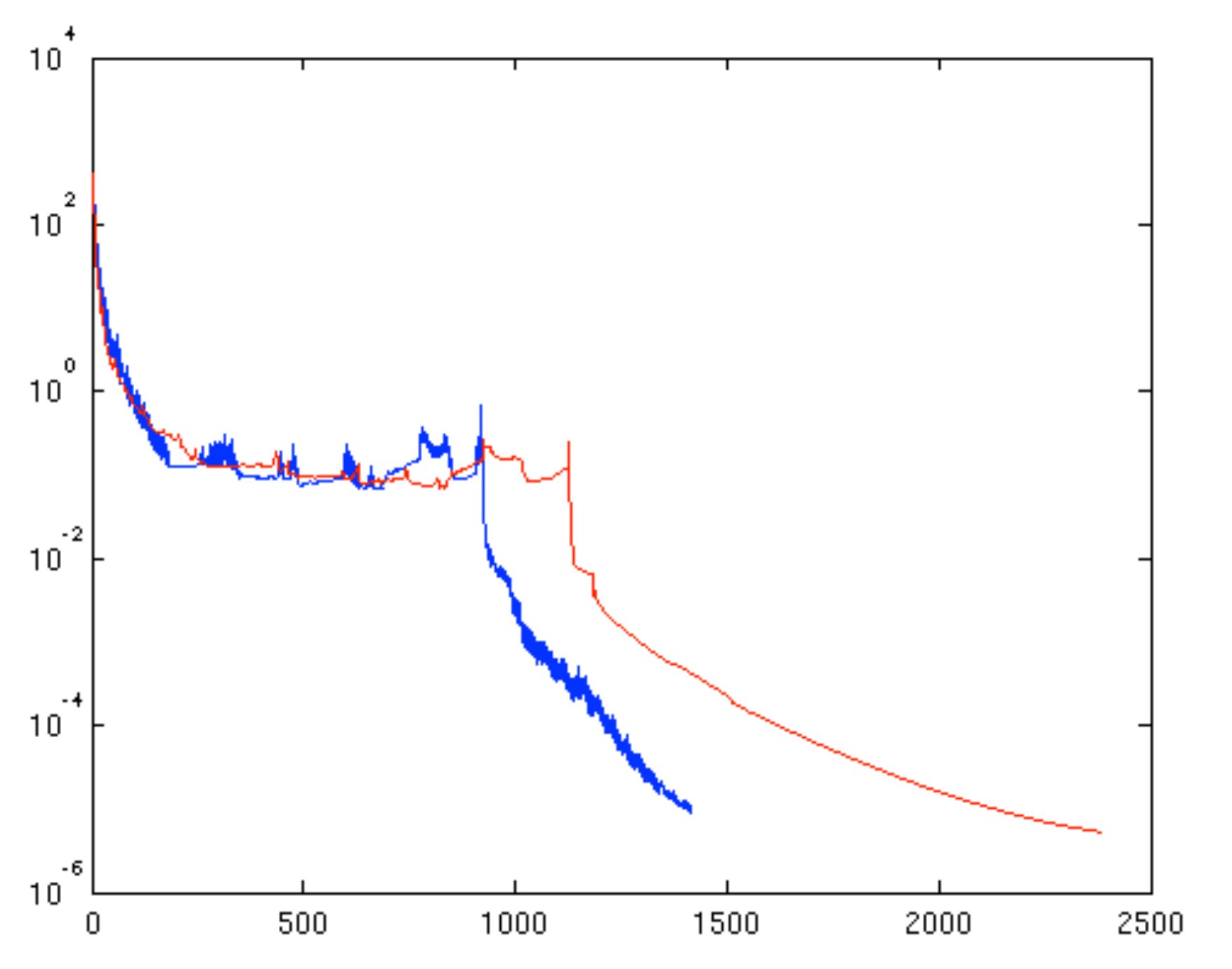}

Stand, $\mu=0.0075$, $K=6$ \quad \quad \quad \quad \quad \quad
Stand, $\mu=0.001$, $K=6$

\includegraphics[width=6cm]{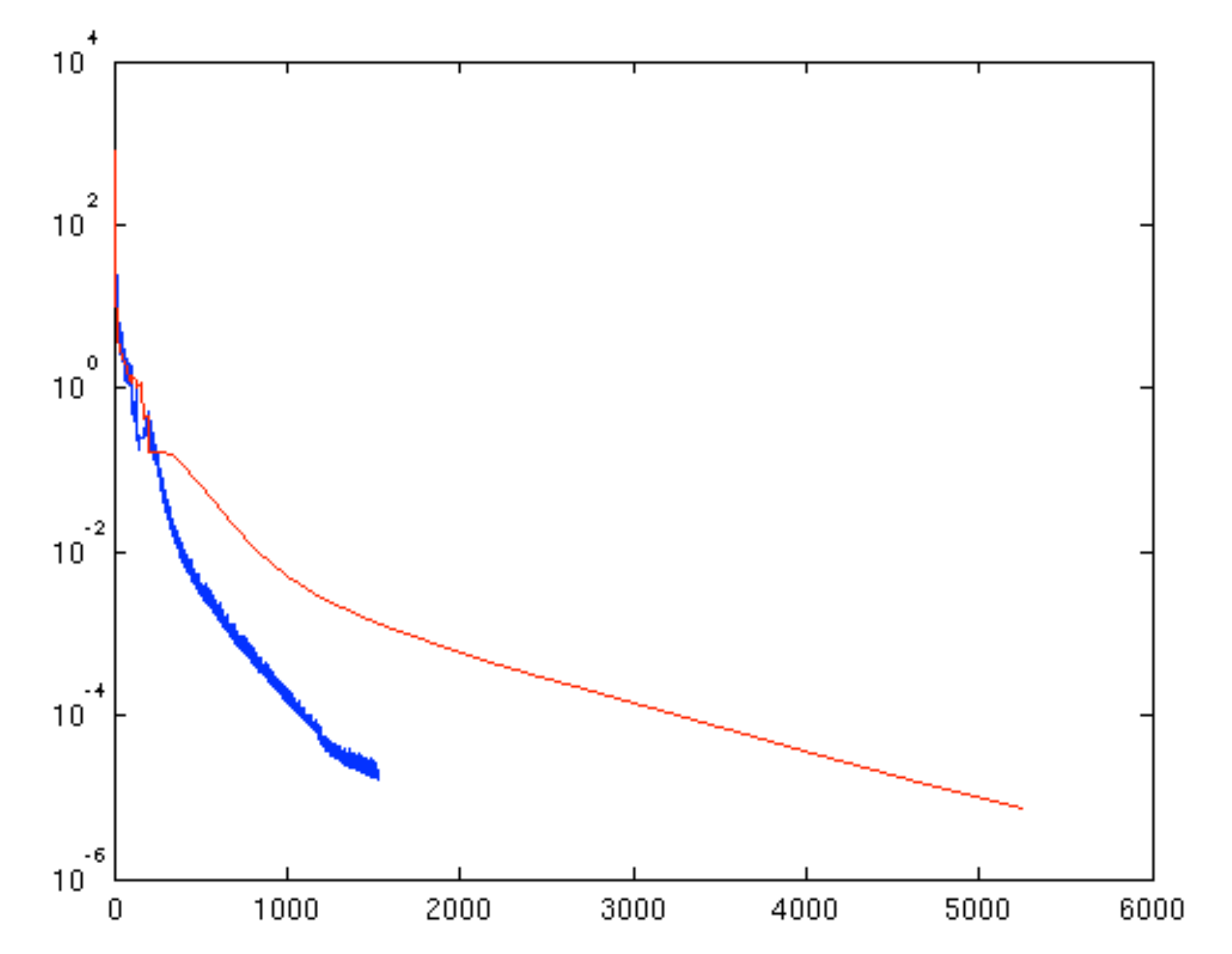}
\includegraphics[width=6cm]{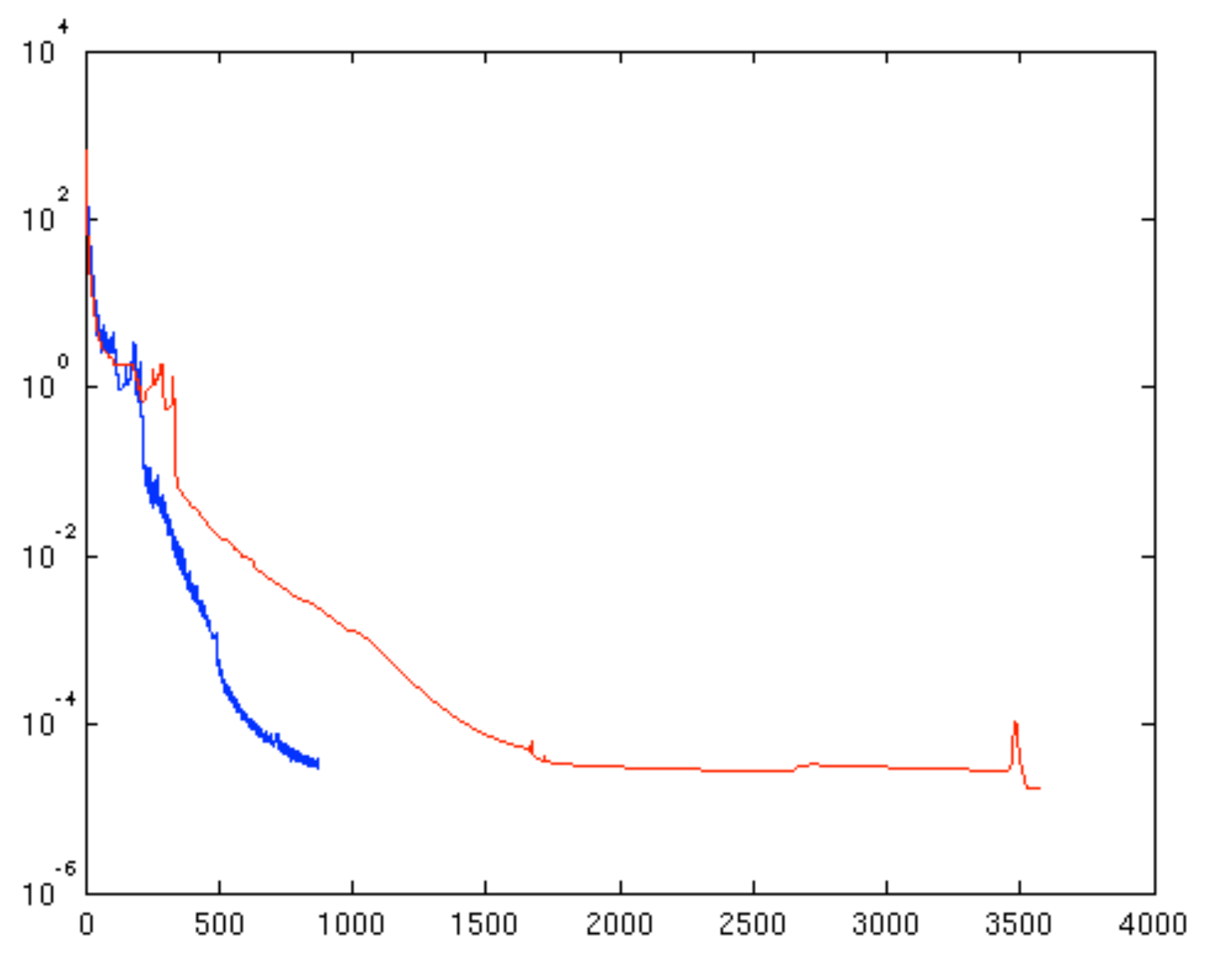}

Aquarius, $\mu=0.0075$, $K=6$ \quad \quad \quad \quad \quad \quad
Stand, $\mu=0.0075$, $K=10$

\includegraphics[width=6cm]{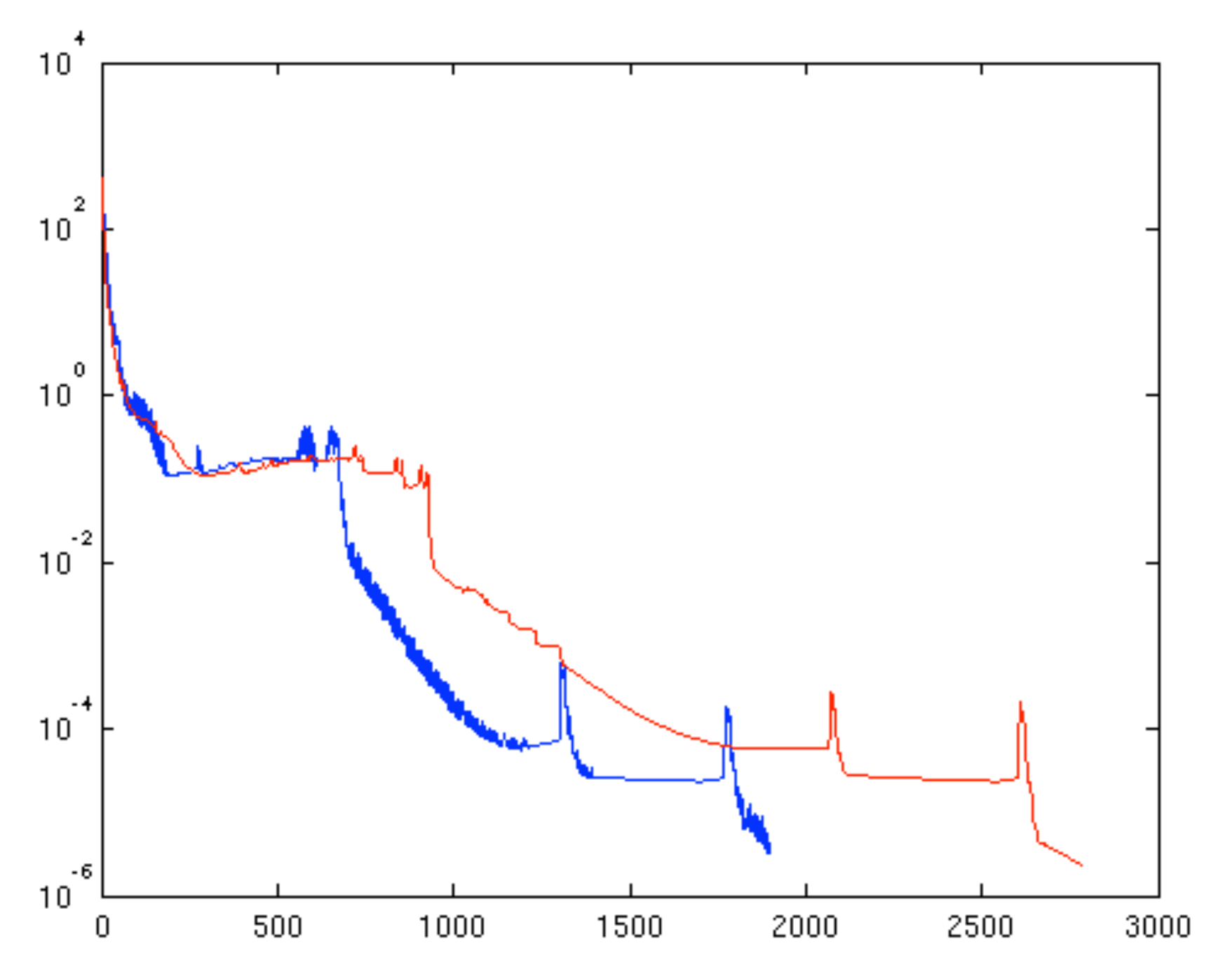}
\includegraphics[width=6cm]{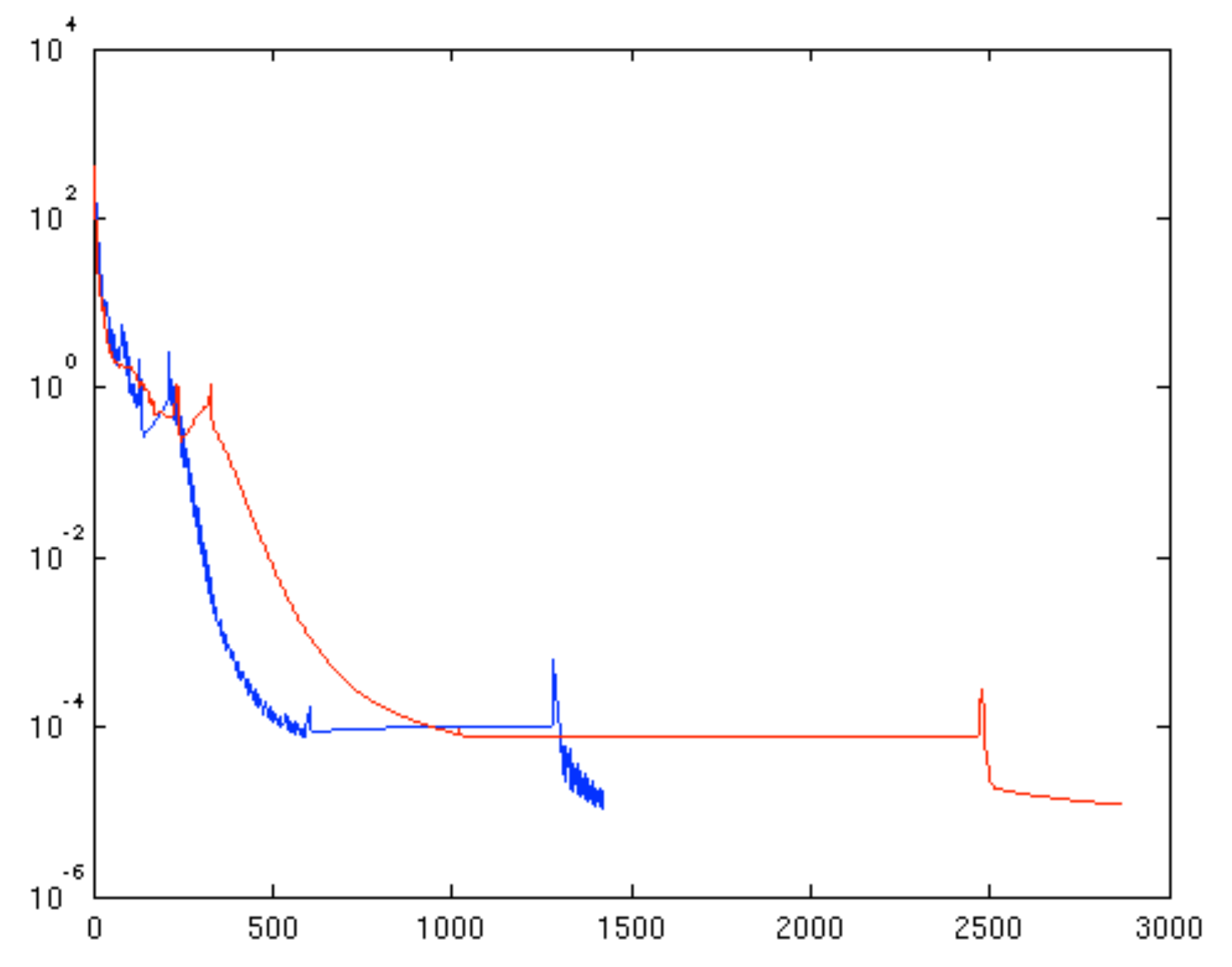}

Crouch, $\mu=0.001$, $K=6$ \quad \quad \quad \quad \quad \quad
Crouch, $\mu=0.0075$, $K=6$

\includegraphics[width=6cm]{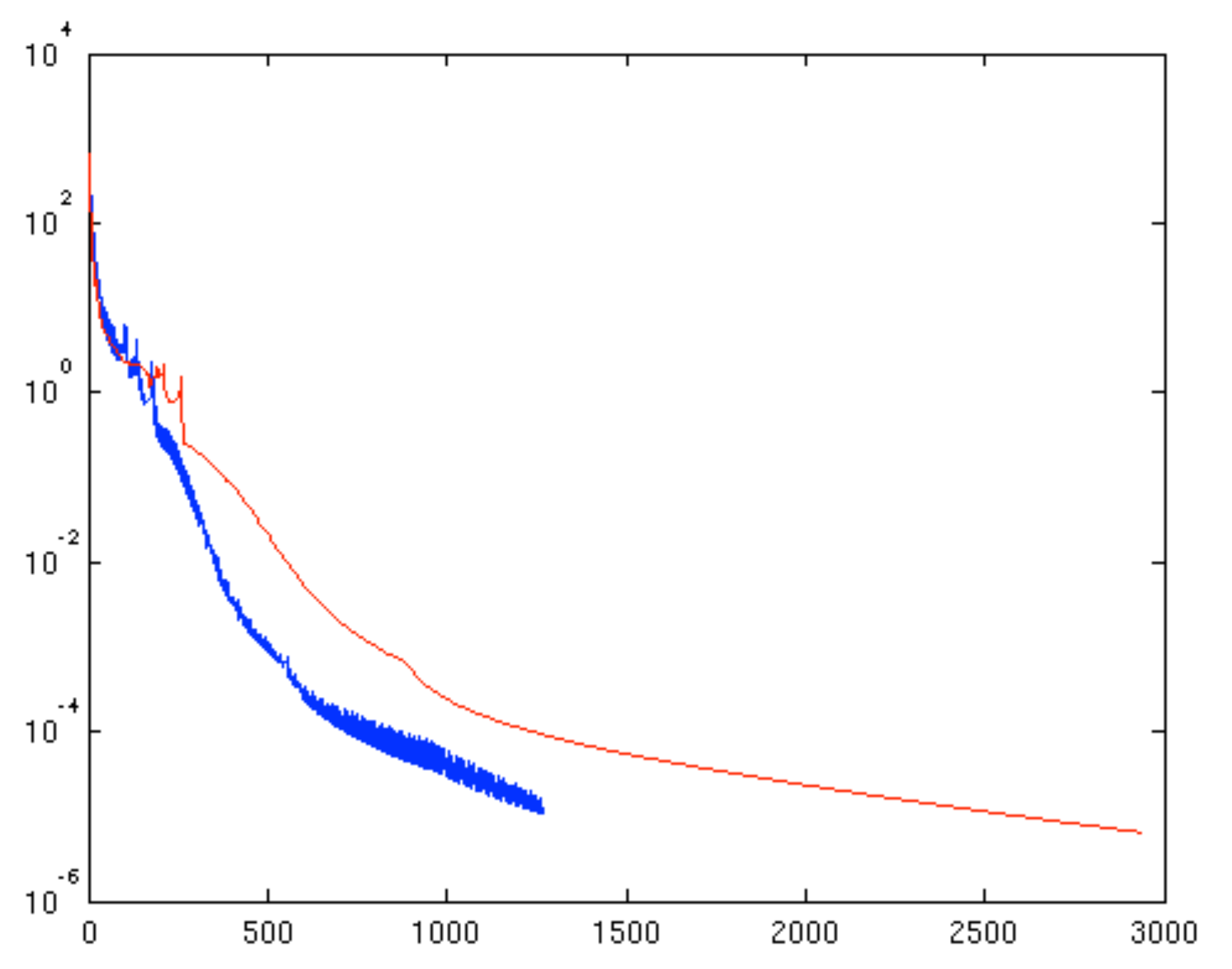}

Crouch, $\mu=0.0075$, $K=10$

\caption{\label{fig:compareresidual}Comparison of typical sum of prime and dual residual for accelerated ADMM (blue) and ADMM (red).}
\end{center}
\end{figure}

\begin{figure}
\begin{center}
\includegraphics[width=14cm]{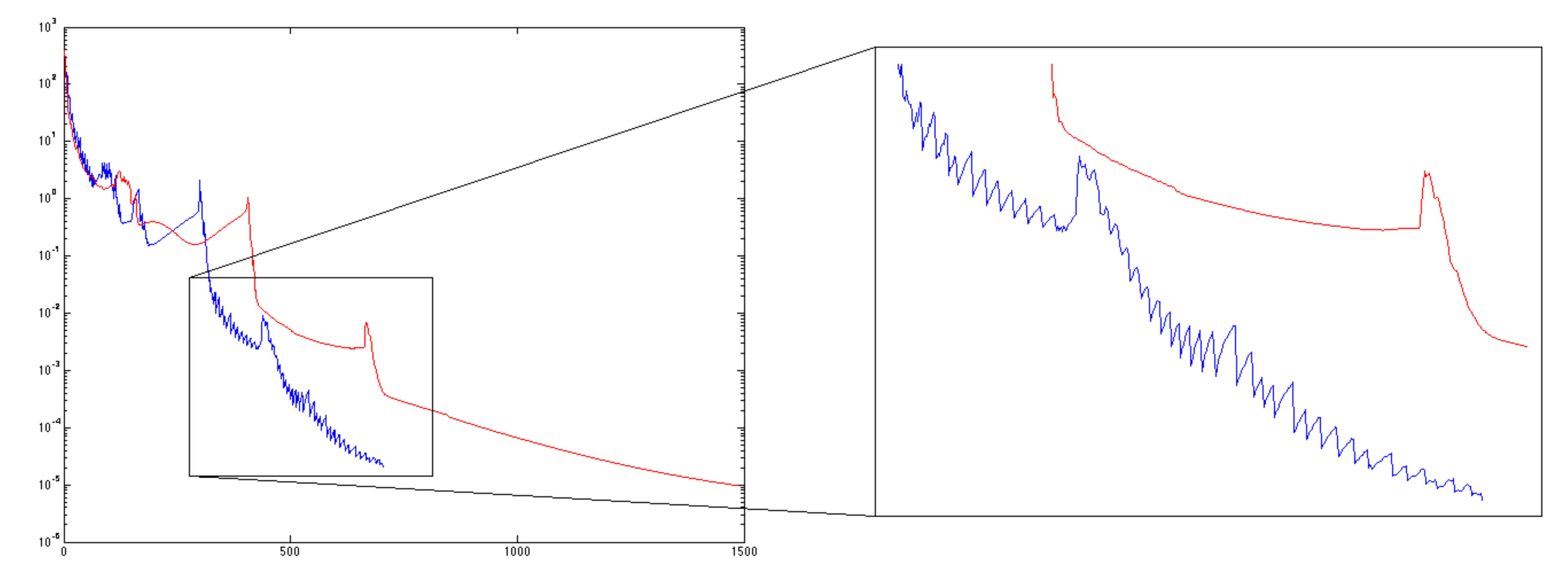}
\caption{\label{fig:ripples}A close up (with scale) of the Nesterov ripples. Mesh was Stand with $\mu=0.008$ and $K=6$.}
\end{center}
\end{figure}

\section{A natural order for compressed manifold modes}
\label{sec:order}
For the Laplace-Beltrami operator we can naturally order the eigenfunctions by their associated eigenvalues. We shall now describe the correct analogous natural ordering for compressed manifold modes. One could naively order the modes according to their Dirichlet energy. That is, ignore the additional $L_1$ term in the minimisation. This leads to poor results as one can see by comparing Figure~\ref{fig:cmm14fig13} (our ordering) and Figure~\ref{fig:dir-ordered} (Dirichlet ordering). The former is better than the latter for these near-isometric surfaces. 

In Figure~\ref{fig:cmm14fig13} the modes for a mesh fall naturally into pairs of consecutive modes with the exception of the third row where one could argue that positions 4 and 5 should be exchanged. One could also argue that the head- and torso-supported modes are not in the same sequence in the three meshes. However, they do only occur in positions 7 and 8. Contrast these small differences with the larger ones in Figure~\ref{fig:dir-ordered}, the Dirichlet ordering. Here the head-supported mode appears in positions 3, 4, and 8. The torso supported mode appears in positions 5 and 6. Furthermore, no mesh consistently has what we could call natural pairs and in particular the bottom mesh has most of the modes not occurring in consecutive pairs. 

Having shown the superiority of the new ordering, let us now describe it in detail. Let $L=A^{-1}W$ denote the Laplace-Beltrami operator where $A$ is the mass matrix and $W$ the weight matrix. Suppose we wish to find $K$ modes. Then the modes can be represented as the $K$ columns of $\Phi $ where $\Phi $ is determined by the constrained optimization problem
\[
\min_{\Phi } \Tr \left( \Phi ^T W \Phi \right) + \mu || \Phi ||_1
{\text{ such that }} \Phi ^TA \Phi = \Id.
\]

Since $A$ is symmetric the constraint $\Phi^TA \Phi=\Id$ determines $K(K-1)/2$ distinct equations. If we apply the Lagrange multiplier method we require $K(K-1)/2$ Lagrange multipliers $\lambda _{ij}$, with $1\leq i \leq k$ and $i\leq j \leq k$.

\begin{lemma}
The Lagrangian function, $\mathcal{L}$, for this constrained optimization problem is 
\begin{equation*}
\mathcal{L}(\Phi) =  \Tr  \left( \Phi^T W \Phi \right) + \mu || \Phi||_1 -\Tr  \left(\left( \Phi^TA \Phi- \Id \right)  \Lambda \right) 
\end{equation*}
where $\Lambda $ is the real symmetric $K\times K$ matrix with entries $[\Lambda ]_{ij}=\lambda _{ij}$ for $1\leq i \leq k$ and $i\leq j \leq k$.

\end{lemma}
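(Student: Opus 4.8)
The plan is to carry out the standard Lagrange multiplier construction and then show that the resulting sum of multiplier-weighted constraint terms collapses to the single trace $\Tr\left(\left(\Phi^T A\Phi-\Id\right)\Lambda\right)$. Forming the Lagrangian is a purely algebraic manipulation, so the non-smooth term $\mu\|\Phi\|_1$ simply carries through untouched and plays no role; all the work concerns the handling of the matrix constraint $\Phi^TA\Phi=\Id$.

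First I would make precise what the constraint contributes. Since $A$ is symmetric, $G(\Phi):=\Phi^TA\Phi-\Id$ is a symmetric $K\times K$ matrix, so the equation $G(\Phi)=0$ is equivalent to the scalar equations $[G(\Phi)]_{ij}=0$ for $1\le i\le j\le K$, and to each I attach a multiplier $\lambda_{ij}$. The textbook Lagrangian is then
\[
\mathcal{L}(\Phi)=\Tr\left(\Phi^TW\Phi\right)+\mu\|\Phi\|_1-\sum_{1\le i\le j\le K}\lambda_{ij}\,[G(\Phi)]_{ij}.
\]
It remains to identify the double sum with a trace against the symmetric matrix $\Lambda$ whose entries are the $\lambda_{ij}$.

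The key computation is to expand $\Tr(G\Lambda)$ for the two symmetric matrices $G$ and $\Lambda$. Writing $\Tr(G\Lambda)=\sum_{i,j}[G]_{ij}[\Lambda]_{ji}$ and pairing the $(i,j)$ and $(j,i)$ terms using $[G]_{ij}=[G]_{ji}$ and $[\Lambda]_{ij}=[\Lambda]_{ji}=\lambda_{ij}$, one obtains $\Tr(G\Lambda)=\sum_i\lambda_{ii}[G]_{ii}+2\sum_{i<j}\lambda_{ij}[G]_{ij}$. The anticipated obstacle, and the only real subtlety, is that this differs from the scalar sum $\sum_{i\le j}\lambda_{ij}[G]_{ij}$ by a factor of two on the off-diagonal terms. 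I would resolve this in the standard way: the $\lambda_{ij}$ are free real parameters, so rescaling the off-diagonal multipliers by $1/2$ leaves the set of constrained critical points unchanged and turns the scalar sum into exactly $\Tr(G\Lambda)$.

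Equivalently, and perhaps more cleanly, I would write the penalty as the Frobenius pairing $\langle\Lambda,G\rangle=\Tr(\Lambda G)$ with $\Lambda$ a general $K\times K$ matrix, and observe that the antisymmetric part of $\Lambda$ contracts to zero against the symmetric $G$; hence one may take $\Lambda$ symmetric without loss of generality, and a symmetric $\Lambda$ then carries precisely $K(K+1)/2$ independent entries, matching the number of independent constraint equations. Either route yields $\mathcal{L}(\Phi)=\Tr\left(\Phi^TW\Phi\right)+\mu\|\Phi\|_1-\Tr\left(\left(\Phi^TA\Phi-\Id\right)\Lambda\right)$ with $\Lambda$ real symmetric, as claimed.
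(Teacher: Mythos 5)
Your proof is correct, and at its core it rests on the same identity as the paper's: a multiplier-weighted sum of the entries of the symmetric constraint matrix $G(\Phi)=\Phi^TA\Phi-\Id$ equals a trace against the symmetric multiplier matrix $\Lambda$. The mechanics differ slightly. The paper \emph{defines} the Lagrangian as the full double sum $\sum_{i,j}\left[ G(\Phi)\circ \Lambda\right]_{ij}$ (a Hadamard product over all pairs $(i,j)$, not just $i\le j$) and then cites Lemma~5.1.5 of Horn and Johnson, $\sum_{i,j}[X\circ Y]_{ij}=\Tr(XY^T)$, together with the symmetry of $\Lambda$, to get the trace form; by starting from the double sum it never confronts the factor of two on the off-diagonal terms, silently absorbing it into the (free) multipliers. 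You instead start from the textbook half-sum over $i\le j$, expand $\Tr(G\Lambda)$ directly, and explicitly rescale the off-diagonal multipliers by $1/2$ --- a more self-contained and more careful treatment of the one genuine subtlety, with your Frobenius-pairing remark (the antisymmetric part of $\Lambda$ contracts to zero against symmetric $G$) giving a clean conceptual reason why $\Lambda$ may be taken symmetric. As a side point, your count of $K(K+1)/2$ independent constraints is the right one; the paper's preceding text says $K(K-1)/2$, which is inconsistent with its own stated index range $1\le i\le K$, $i\le j\le K$ and is evidently a typo. Both routes are valid; the paper's is shorter by citation, yours is elementary and makes the multiplier normalization explicit.
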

\begin{proof}
We can define the Hadamard product of matrices $X$ and $Y$, denoted $X\circ Y$,  to be their elementwise product. That is, 
\[
[X\circ Y]_{ij} = [X]_{ij} \cdot [Y]_{ij},
\] 
where $[X]_{ij}$ is the $(i,j)$ entry of the matrix $X$.
Now, the Lagrangian function is
\[
\mathcal{L}(\Phi) =  \Tr  \left( \Phi^T W \Phi \right) + \mu || \Phi||_1 -  \sum_{i,j}\left[ \left( \Phi^TA \Phi- \Id \right) \circ \Lambda \right]_{ij} .
\]
From Lemma~5.1.5 of \cite{hornandjohnson} we have that $ \sum_{i,j}\left[ X\circ Y \right]_{ij}= \Tr (XY^T) $.
From this the result follows. 
\end{proof}
The method of Lagrangian multipliers requires the solution of $\dfrac{\partial \mathcal{L}}{\partial \Phi} =0 $. 
Using the facts that for matrices $X$, $Y$, $Z$ and function $f$, 
\begin{align*}
\dfrac{\partial }{\partial X } \Tr (XYX^TZ)&=ZXY+Z^TXY^T, \text{\quad (\cite{matrixcookbook} equation 118)}, \\
\dfrac{\partial }{\partial X^T } f(X) &= \left( \dfrac{\partial }{\partial X} f(X) \right) ^T ,
\end{align*}
we see that the equation $\dfrac{\partial \mathcal{L}}{\partial \Phi} =0 $ gives
\begin{align*}
2W\Phi+ \mu \sign (\Phi) - 2 A \Phi \Lambda &=0 \\
\Phi^T W \Phi+ \dfrac{\mu}{2} \Phi^T \sign (\Phi) &=\Phi^T A \Phi  \Lambda \\
\Phi^T W \Phi+ \dfrac{\mu}{2} \Phi^T \sign (\Phi) &= \Lambda .
\end{align*}
By considering the diagonal of both sides, and noting that for a vector $\varphi $ we have $\varphi^T \sign (\varphi) =||\varphi ||_1$, we can associate a number to each column of $\Phi $, i.e., to a compressed manifold mode. 
\begin{definition}
Let $\varphi $ be a compressed manifold mode of the operator $A^{-1}W$ with compression factor $\mu$. The {\textbf{compressed eigenvalue}}, $\lambda $, associated to $\varphi $ 
is the number
\[
\lambda = \varphi^T W \varphi+ \dfrac{\mu}{2} ||\varphi ||_1 .
\]
\end{definition}
Note that when $\mu =0$ and $K$ is the the number of rows of $W$ we get the standard eigenvalues of the operator $A^{-1}W$.

Figure~13 of \cite{CMM14} demonstrated that their algorithm consistently found similar modes for non-isometrically deformed meshes, in particular for a standing, crouching and limb bending man from the SCAPE dataset. This experiment was rerun with the accelerated algorithm (with $\mu =0.008$) and the modes were ordered and flipped (see Section~\ref{sec:flip} for an explanation of the latter). This is pictured in Figure~\ref{fig:cmm14fig13} and the resulting compressed eigenvalues are given in Table~\ref{tab:30menlambda}.

\begin{figure}
\begin{center}
\includegraphics[width=12cm]{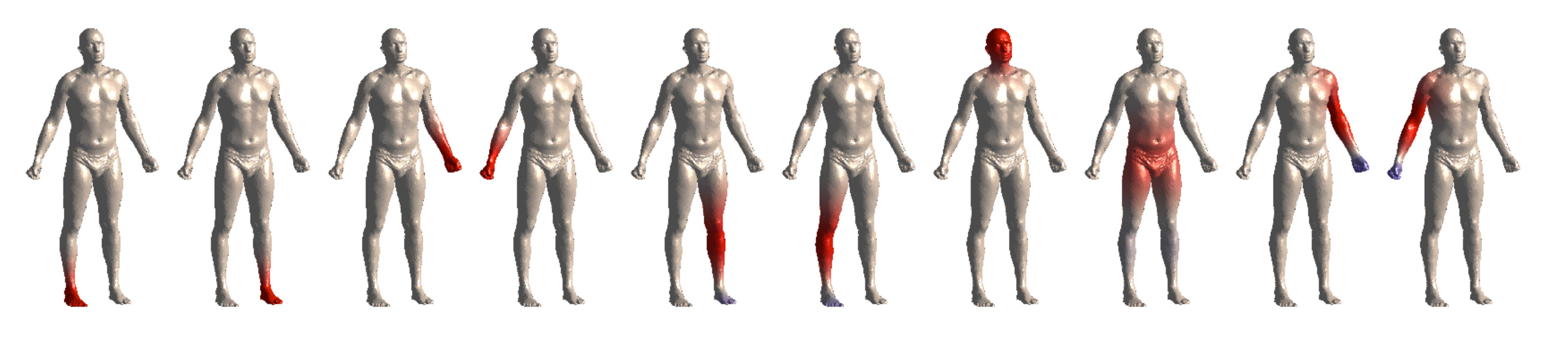}

\includegraphics[width=12cm]{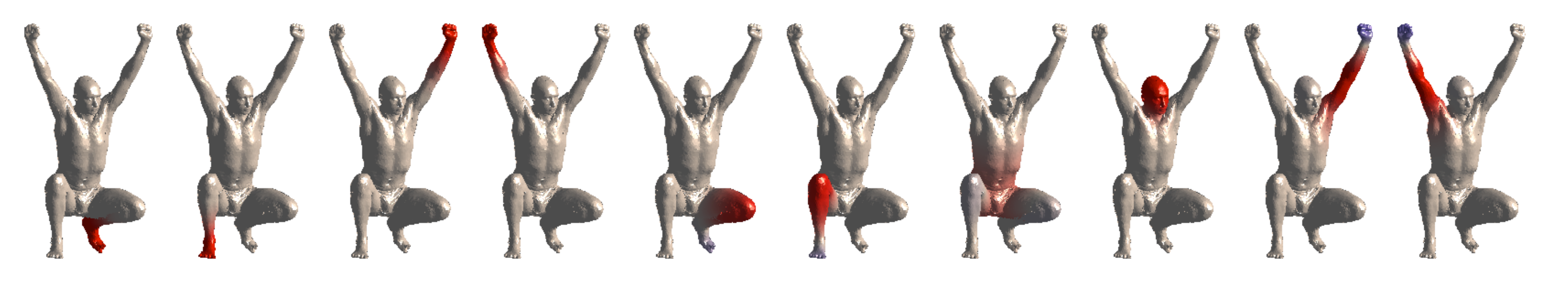}

\includegraphics[width=12cm]{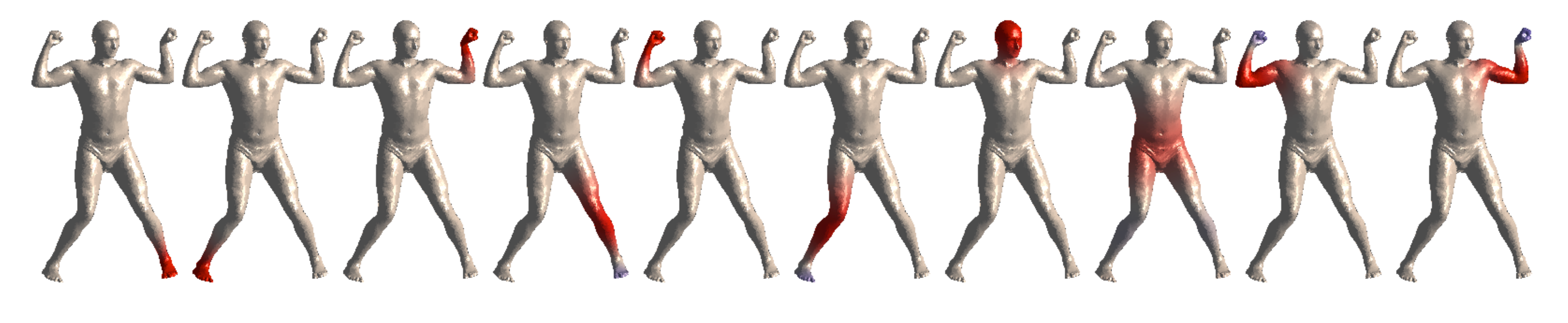}

\caption{\label{fig:cmm14fig13}
Ordering for three meshes, cf.~Figure~13 of \cite{CMM14}.}
\end{center}
\end{figure}

\begin{figure}
\begin{center}
\includegraphics[width=12cm]{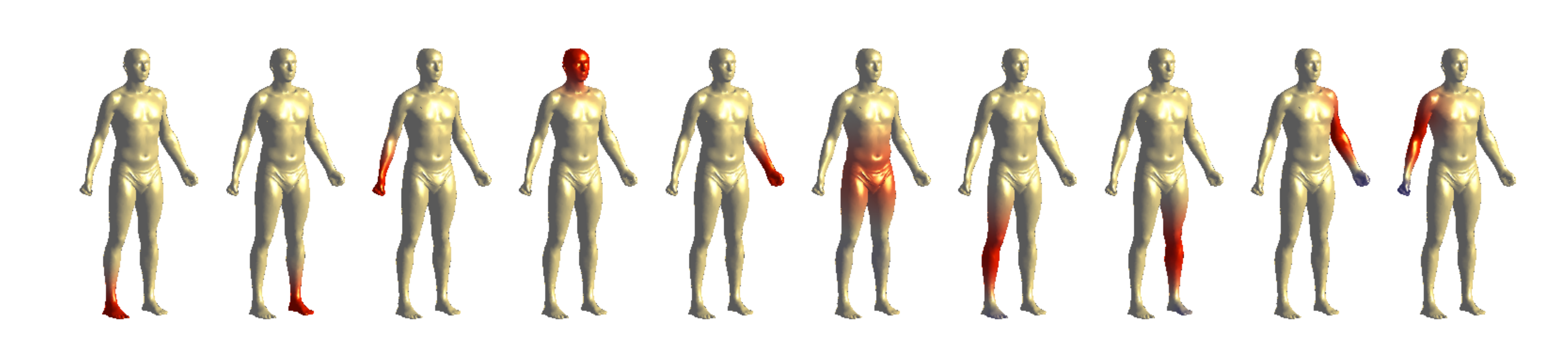}

\includegraphics[width=12cm]{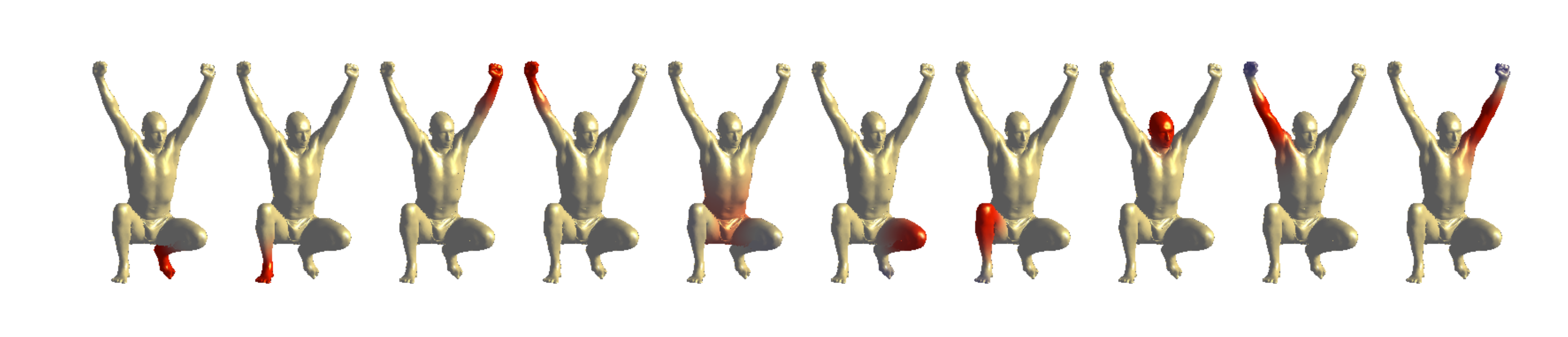}

\includegraphics[width=12cm]{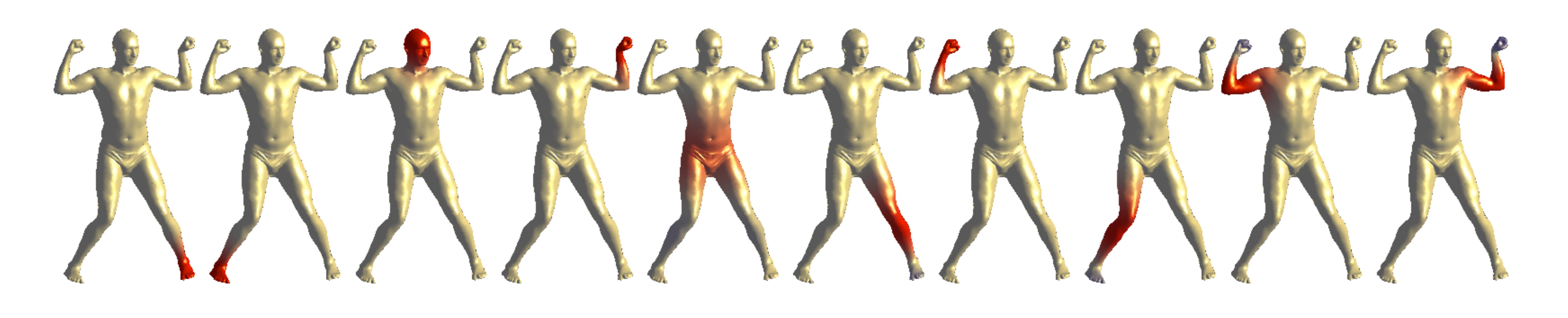}

\caption{\label{fig:dir-ordered}
Ordering for three meshes by only the Dirichlet energy, note that the ordering is not as good as Figure~\ref{fig:cmm14fig13}.}
\end{center}
\end{figure}

\begin{table}
\begin{center}
\begin{tabular}{|l|l|l|c|l|}
\hline
 & Stand & Crouch & Bent limbs & Teapot\\
  & & & &  $10^{-4} \times $ \\
 \hline
1&	19.6155	&	20.3833	&	20.5455	&	34.1979 \\
2&	19.8722	&	20.5321	&	20.6391	&	57.9528 \\
3&	21.8682	&	21.4825	&	24.6300	&	60.1676 \\
4&	23.7901	&	23.0605	&	26.7960	&	60.1705 \\
5&	27.3304	&	25.6135	&	27.9656	&	60.9936 \\
6&	28.1828	&	27.9401	&	29.0192	&	62.9510 \\
7&	28.7432	&	30.4192	&	29.0294	&	63.0666 \\
8&	34.6949	&	33.1840	&	33.6681	&	63.2446 \\
9&	39.4302	&	36.5946	&	38.3945	&	63.3030 \\
10&	39.5416	&	36.8892	&	38.4758	&	63.3360 \\
\hline
\end{tabular}			
\end{center}
\caption{\label{tab:30menlambda}Compressed eigenvalues for some meshes.}
\end{table}

The compressed eigenvalues can be closely grouped. For example, Aquarius, the Water Carrier, pictured in Figure~\ref{fig:locsup} has compressed eigenvalues
$2.2223$, $2.4737$, $2.5792$, $2.6721$, $2.8335$, and $3.4063$ for $K=6$ and $\mu=0.001$. (The accelerated algorithm took $3308$ iterations from a random initialization.)

For the human meshes we can see in Table~\ref{tab:30menlambda} that they range from $19.6155$ to $39.5416$. To put this into perspective, the first ten eigenvalues of the LBO range from $0$ to $31.7886$.
 One can see clearly in Figure~\ref{fig:cmm14fig13} that eight of the modes occur in obvious pairs: pair of feet, pair of legs, pair of hands, and pair of arms. This pairing is reflected in the compressed eigenvalues, for example for the Stand mesh the compressed modes supported on the feet correspond to compressed eigenvalues $19.6155$ and $19.8722$. This means that it is highly likely that, just as in the eigenvalue case of symmetric objects, numerical errors can cause compressed eigenvalues to be out of order. This is possibly what is happening with the human poses where the first compressed eigenvalue is sometimes the left foot, sometimes the right.

A set of twenty modes is shown in Figure~\ref{fig:teapot} for the classic  teapot. Here $\mu =0.0008$ and the initialisation was not random but the first twenty eigenfunctions of the discrete Laplace-Beltrami operator. Note again that the supports of the modes coincide with what a human might identify: the teapot spout, handle, lids, and knob of lid. 
The modes divide the spout in three. 
It would interesting to investigate how varying the parameter $\mu $ affects symmetry in the support of the modes.

Note again, that similar modes arising from symmetries have closely grouped compressed eigenvalues. This can be seen in Table~\ref{tab:30menlambda} where the 3rd to 5th compressed eigenvalues correspond to the three regions on the teapot lid. 
\begin{figure}
\begin{center}
\includegraphics[width=12cm]{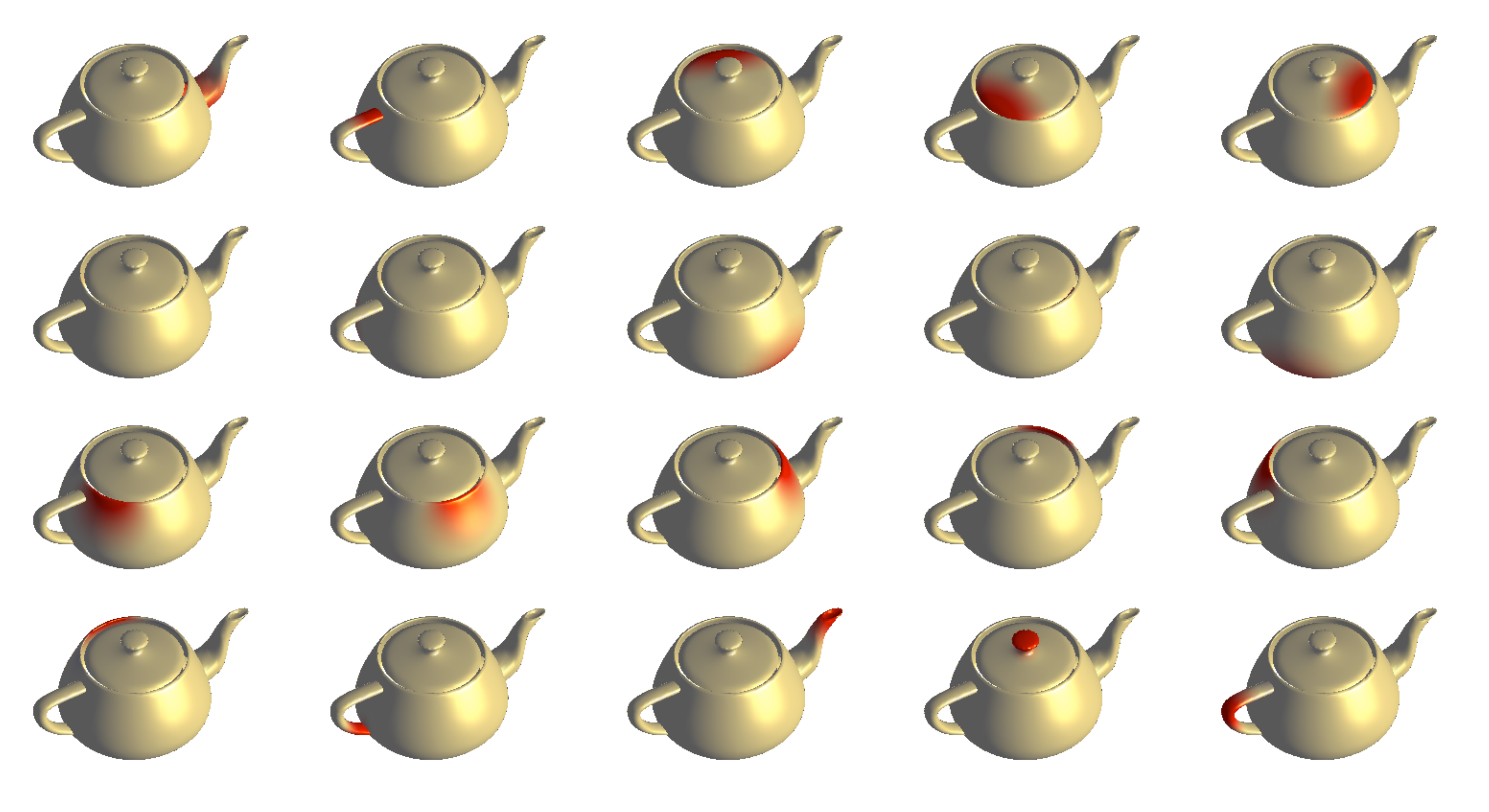}
\caption{\label{fig:teapot}Modes on the teapot.}
\end{center}
\end{figure}

\section{Accuracy and Consistency}
\label{sec:accuracy}
A measure of the accuracy of a numerical approximation of an eigenvector $v$ of a linear map $L$ can be calculated by $Lv-\lambda v$. The theory behind the calculation of ordering gives us an analogous measure of the accuracy of the algorithm in calculating modes. A mode $\varphi $ with compressed eigenvalue should satisfy the condition $W\varphi +(\mu/2)\sign (\varphi )=\lambda A\varphi $ and this can be used to check accuracy.

In both the original and accelerated versions of the algorithm the average error of entries of the vector 
$W\varphi +(\mu/2)\sign (\varphi )-\lambda A\varphi $
was of the order $10^{-3}$ to $10^{-4}$.  Hence, although we cannot have much confidence in the accuracy of the methods we can say that accuracy is not lost by using the faster algorithm. The accuracy aspect of CMMs requires further investigation. In particular, what is the trade-off between speed and accuracy and what is the relationship between the algorithm stopping tolerances $\epsilon^{\text{abs}}$ and $\epsilon^{\text{rel}}$ and accuracy.

One method to improve the accuracy was to perform 200-400 iterations of the accelerated algorithm, set all entries in $\Phi $ below a preset tolerance to zero and then restart the algorithm with this new $\Phi $. This did not conclusively improve accuracy but had the advantage of occasionally increasing the speed significantly. The results were inconsistent and further study is required. One could also try a similar approach by setting to zero the entries of $\Phi$ that correspond to large absolute entries of $W\varphi +(\mu/2)\sign (\varphi )-\lambda A\varphi $ for the different $\varphi $.

We now turn to consistency.
Given a random initialization it is clearly plausible that resulting modes are inconsistent, that is distinct runs of the algorithm with fixed $\mu$ and $K$, etc, may produce a different collection of modes. However, experiments show that for certain values of $\mu $ the calculation of modes by the accelerated method is very consistent. For the SCAPE dataset models (Stand, Crouch and Bent Limbs in this paper) repeated experiments on random initializations involving numbers from $0$ to $1$ produced less than 1\% inconsistent results for $\mu $ around $0.008$. 
For other models, for example Aquarius, it was harder to locate a good value of $\mu $ to use. 

The value of $\mu $ is important for applications as $\mu $ determines the `size' of the support of modes, that is the amount the support covers the model. If $\mu $ is large, then the supported area is small.  
If the supported area is small, then it is likely that supported areas do not interact during the iterations and this can in theory lead to inconsistency. 
Hence, the value of $\mu $ is important for consistency but is poorly understood.

One way of avoiding inconsistency is to use the eigenfunctions of the Laplace-Beltrami operator as the initialization of $\Phi $. This need not give the same $K$ modes as the random initialization and it would be interesting to investigate further the use of this initialization as it worked quite well in some experiments. For example, the teapot in Figure~\ref{fig:teapot} was calculated using the first twenty eigenfunctions and this was very consistent for small changes of $\mu$ in the sense we had the same modes in the same order.
Nonetheless, what are the quantitative changes to the modes when $\mu $ is varied is still an open question.

When one changes the number $K$ of modes to be calculated, then one gets inconsistent results. For example, compare Figure~\ref{fig:choiceK} with Figure~\ref{fig:cmm14fig13} where $K$ is $8$ and $10$ respectively. The modes calculated for $K=8$ are not the first $8$ modes calculated for $K=10$. However, Figure~\ref{fig:cmm14fig13} gives one confidence that good consistency is achievable even between near isometric shapes.
\begin{figure}
\begin{center}
\includegraphics[width=12cm]{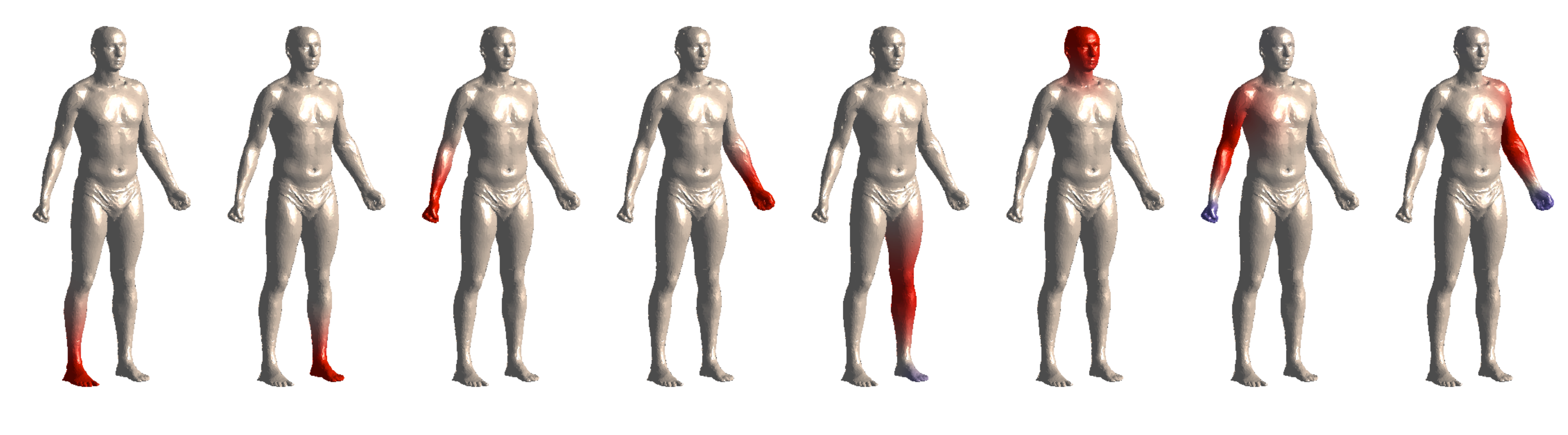}
\caption{\label{fig:choiceK}Effect of the choice of $K$.}
\end{center}
\end{figure}

\section{Orientation and flipping of modes}
\label{sec:flip}
A disadvantage of the Laplacian unit eigenfunctions is that they are defined only up to sign and there is no obvious way to `flip' them so that eigenfunctions between different manifolds can be directly compared. For these eigenfunctions one expects an `equal amount' of positive and negative values. That is, the integral of  the positive values equals the integral of the negative values. (This is because the eigenfunctions are orthogonal to the constant vector, the eigenfunction for the zero eigenvalue.)
However, for general compressed modes there is not this balancing of the positive and negative. Hence we can flip modes so that the ambiguity of parity is removed.. 

In practice, the values of a mode are dominated by numbers of either positive or negative values. For a mode $\varphi $ the number $\sign \left( \max(\varphi)+\min(\varphi) \right)$ will, in practice, be $\pm 1$ depending on the sign of the dominant values. One can then use this to change the sign of the columns of $\Phi $. In this way, the modes chosen to be dominated by positive values. A problem only occurs when $\max (\varphi)=-\min (\varphi )$ but this is unlikely in practice.

An alternative and perhaps more theoretically sound way to flip the mode is to integrate it over the manifold and define its sign to be the sign of the resulting value. 

That the positive values dominate after flipping can be seen in the figures where red is positive.

\section{Lumped and unlumped mass matrices}
\label{sec:lump}
The mass matrix $A$ in $L=A^{-1}W$ contains information about the area around the vertices. Many variations are possible, for example one-third area, Voronoi area or uniform area, see \cite{LevyZhang}. 
Given such a matrix one can `lump' the matrix by summing all the elements of a row and putting the result in the diagonal. A mass matrix with non-zero entries only on the diagonal is called a {\bf{lumped matrix}}. 

The algorithm in \cite{CMM14} is given for lumped matrices. In that paper $D$ is a diagonal matrix and it is easy to calculate $D^{1/2}$ and its inverse $D^{-1/2}$, where $D^{1/2}$ is formed from the square roots of the diagonal entries. For a positive definite matrix $B$ such as the mass matrix there exists a square root matrix $B^{1/2}$ such that $B^{1/2}B^{1/2}=B$. However, the calculation of this can be computationally expensive and hence though it is theoretically possible with this method to include unlumped mass matrices in the algorithm of \cite{CMM14} this may seriously impact the speed of computation.

Nonetheless, unlumped matrices can be used. 
In \cite{CMM14} the first part of the algorithm involves letting $Y=D^{1/2} (S-U_S+E+U_E)$, 
finding an SVD factorization of $Y^TY=VWV^T$ and then the closed form of the minimization problem is
\[
\Phi = D^{-1/2} YVW^{1/2}V^T .
\]
If instead we take $\widetilde{Y}=S-U_S+E+U_E$, and so $Y=D^{1/2}\widetilde{Y}$ then $Y^TY=VWV^T$  simply becomes, 
$\widetilde{Y}^TD\widetilde{Y}=VWV^T$ and the closed form of the minimization problem is then
\[
\Phi = D^{-1/2} \left( D^{1/2}\widetilde{Y} \right) VW^{1/2}V^T 
=\widetilde{Y} VW^{1/2}V^T .
\]
Thus with this modification to the algorithm we can avoid the expensive computation of the square root of the mass matrix and allow use of unlumped matrices. 

Numerous experiments with different meshes, values of $K$ and $\mu$, failed to show that either the lumped or unlumped version was superior in terms of speed or accuracy. It seems unlikely that lumping does not at least have some effect and so while the experiments were inconclusive it may be the case that there do exist identifiable situations in which one method is superior to the other.

To ensure consistency and comparability the experiments described in this paper were done with lumped matrices. It should be noted that lumped matrices are used for the LBO much more commonly than unlumped. The aim here is to show that we can use unlumped if we wish.

\section{Conclusions and future work}
Compressed manifold modes were demonstrated in \cite{CMM14} to be robust with respect to noise and holes and have the potential for use in geometry processing applications. Indeed as they generalize to surfaces the compressed modes introduced in \cite{ozolicnvs2013compressed} they also have potential for application in solving PDEs on surfaces. In this paper it has been shown that the algorithm of \cite{CMM14} for the computation of compressed manifold modes can be considerably improved - by almost a factor of 2. The correct method for ordering modes has been demonstrated (with proof) and the modes have been oriented.

The study of the numerical calculation of eigenvalues has had the advantage of decades of work and the development of many optimization techniques and so unsurprisingly the calculation of compressed manifold modes is not  competitive in terms of speed. This paper helps reduce this uncompetitiveness but the development of further improvements would be welcome.

Compressed manifold modes are of interest because they are functions with local support. A crucial insight is that if one is looking for a collection of functions to form a basis of functions on the manifold, then speed and accuracy are not that important, it is the orthonormality of the collection that is important. In the case of the CMM algorithms the locality of support and orthonormality can, and in the experiments usually do, appear rapidly, say within 400-600 iterations of the algorithm. Therefore it would be good to analyse how quickly these properties are achieved.

Further work of interest includes the following:
\begin{enumerate}
\item Investigate and define for numerical calculations what it means for the modes to be locally supported and orthonormal. For the latter we need to check that $\Phi ^T A \Phi $ is close to the identity. How close for practical applications?

\item The random initialization can lead to a wide variation in computation time. For example, in the 30 calculations used in Table~\ref{tab:compare}, the Bent Limbs mesh with $K=10$, $\mu =0.008$ took between $736$ and $4323$ iterations, the latter is a factor of nearly $6$ greater than the former. 
Can we do better or at least be more consistent?

\item Can inconsistency of the collection of modes calculated be reduced by restricting the band of random numbers used as an initialization? See Section~\ref{sec:accuracy}.

\item Can accuracy of the algorithms be improved by some method, for example truncating the support as in Section~\ref{sec:accuracy}.

\item How do changes in the compression factor $\mu $ affect the consistency and accuracy of the algorithms? 
\end{enumerate}

%
%
\bibliographystyle{plain}
\bibliography{cmm-arxiv.bbl}

\end{document}